\newtheorem{Thm}{Theorem}
\newtheorem{Lem}[Thm]{Lemma}
\newtheorem{Cor}[Thm]{Corollary}
\theoremstyle{definition}
\newtheorem{Def}[Thm]{Definition}
\newtheorem{Ex}[Thm]{Example}
\theoremstyle{remark}
\newtheorem{Rem}[Thm]{Remark}
\numberwithin{equation}{section}
\def\RR{{\mathbb{R}}}
\DeclareRobustCommand{\rchi}{{\mathpalette\irchi\relax}}
\newcommand{\irchi}[2]{\raisebox{\depth}{$#1\chi$}} 
\DeclareMathOperator{\Ima}{Im}
\title[How to cut a cake with a Gram matrix ]{How to cut a cake with a Gram matrix}
\date{\today}
\author[G.~Ch\`eze]{Guillaume Ch\`eze}
\address{Guillaume Ch\`eze: Institut de Math\'ematiques de Toulouse\\
Universit\'e Paul Sabatier Toulouse 3 \\
MIP B\^at. 1R3\\
31 062 TOULOUSE cedex 9, France}
\email{guillaume.cheze@math.univ-toulouse.fr}
\author[L.~Amodei]{Luca Amodei}
\address{Luca Amodei: Institut de Math\'ematiques de Toulouse\\
Universit\'e Paul Sabatier Toulouse 3 \\
MIP B\^at. 1R3\\
31 062 TOULOUSE cedex 9, France}
\email{luca.amodei@math.univ-toulouse.fr}
\date{\today}
\begin{document}
	
\maketitle

\begin{abstract}
In this article we study the problem of fair division. In particular we study a notion introduced by J. Barbanel that generalizes super envy-free fair division.
We give a new proof of his result. Our approach allows us to give an explicit bound for this kind of fair division.\\
Furthermore, we also give a theoretical answer to an open problem posed by Barbanel in 1996. Roughly speaking, this question is: how can we decide if there exists a fair division satisfying some inequalities constraints?\\
Furthermore, when all the measures are given with piecewise constant density functions then we show how to construct effectively such a fair division.
\end{abstract}


\section*{Introduction}
In the following $X$ will be a measurable set. This set represents an heterogeneous good, e.g. a cake, that we want to divide between n players. A division of the cake is a partition $X=\sqcup_{i=1}^n X_i$, where each $X_i$ is a measurable subset of $X$. After the division $X_i$ is given to the $i$-th player. A natural and old problem is: how to get a fair division?\\
This  problem  appears  when  we study  division  of  land,  time  or  another divisible resource  between  different  agents  with different points of view. These  problems appear in the economics, mathematics, political
science, artificial intelligence and computer science literature, see \cite{Moulin,handbook}.\\
In order to study this problem, to each player is associated a non-atomic probability measure $\mu_i$. Thus, in particular $\mu_i(X)=1$, and $\mu_i(A \sqcup B)=\mu_i(A) +\mu_i(B)$, where $A$, and $B$ are disjoint measurable sets. These measures represent the preference of each player. Severall notions of fair divisions exist:
\begin{itemize}
 \item Proportional  division: $\forall i,\, \mu_i(X_i) \geq 1/n$.
 \item Exact  division: $\forall i,\, \forall j,\, \mu_i(X_j)=1/n$.
 \item Equitable division: $\forall i,\, \forall j,\, \mu_i(X_i) = \mu_j(X_j)$.
 \item Envy-free division: $\forall i,\, \forall j,\, \mu_i(X_i)\geq \mu_i(X_j)$.
  \end{itemize}

 All these fair divisions are possible, see e.g \cite{Steinhaus,DubinsSpanier,BramsTaylor,RobertsonWebb,Cheze,Segal-Halevi,Weller}.\\
%
  Some fair divisions are possible under some conditions.
Barbanel has shown in \cite{Barbanel94} that a super envy-free division is possible if and only if the measures $\mu_i$ are linearly independent. We recall the definition of a super envy-free fair division:
 \begin{itemize}
\item Super envy-free division:  $\forall i,\, \forall j\neq i,\, \mu_i(X_i)>1/n> \mu_i(X_j)$.
\end{itemize}
 Actually, if the measures are linearly independent then there exists a real $\delta > 0$ such that:
$$\mu_i(X_i)\geq 1/n+\delta, \textrm{ and } \mu_i(X_j)\leq 1/n-\delta/(n-1).$$
We can define an even more demanding fair division. For example, we can imagine that the first player would like to get a partition such that:\\
$\mu_1(X_1)=1/n+ 3 \delta$, \\
 $\mu_1(X_3)=1/n+2 \delta$,\\
   $\mu_1(X_4)= \mu_1(X_5) = 1/n+\delta$,\\
    $\mu_1(X_2)=\mu_1(X_6)=1/n-6\delta$.\\
    
This means that the third, the forth and the fifth player are friends with the  first player, but the second and sixth player are not friends with this player. Furthermore, the first player prefers the third to the forth and the fifth player. We can also imagine that the other players have also preferences between the other players.
These kinds of conditions have also been studied by Barbanel in \cite{Barbanelarticle,Barbanelbook}. This leads to a notion of fair division that we call \emph{hyper envy-free}.\\
\begin{Def}
Consider a matrix $K=(k_{ij}) \in M_n(\RR)$, such that for all $i=1,\ldots,n$, $\sum_{j=1}^n k_{ij}=0$ and a point $p=(p_1,\ldots,p_n)$ such that $\sum_{j=1}^np_j=1$, with $p_i \geq 0$.\\
We say that a partition $X=\sqcup_{i=1}^n X_i$ is \emph{hyper envy-free} relatively to $K$ and $p$  when there exists a real number $\delta>0$ such that 
$$\mu_i(X_j)= p_j + k_{ij} \delta.$$
\end{Def}

For example, for a super envy-free division we have $k_{ij}=-1/(n-1)$,  $k_{ii}=1$ and $p=(1/n,\ldots,1/n)$.\\

Barbanel has given a criterion for the existence of an hyper envy-free division. Unfortunately, the proof of this result does not give an explicit bound on $\delta$ and a natural question is: \emph{How big $\delta$ can be?}\\

In this paper we give a new proof of Barbanel's result. Furthermore, our strategy generalizes the approach for computing super envy-free fair division given by Webb in \cite{Webb}. This allows us to give a bound on $\delta$ in terms of the measures.\\

Another question related to fair division and asked by Barbanel is the following, see \cite{Barbanelarticle}:\\

\emph{Suppose that $p=(p_1,\ldots,p_n)$ is a point such that $p_1+\cdots+p_n=1$ with $p_i$ positive and $r_{ij}$ are $n^2$ relations in $\{ <, = , > \}$. How can we decide if there exists a partition $X=\sqcup_{i=1}^n X_i$ such that $\mu_i(X_j) \, r_{ij}\, p_j$?\\
}

This problem is also a generalization of the super envy-free fair division problem. Indeed, super envy-free fair division corresponds to the situation where $p$ is $(1/n,\ldots,1/n)$, $r_{ii}$ is the relation ``$>$" and $r_{ij}$ is ``$<$".\\
We give a theoretical answer to Barbanel's question in the last section of this article.\\

The organization of our article is the following. In the next section we present our toolbox. We recall the Dvoretzky, Wald, Wolfowitz's theorem that will be the main ingredient of our proof. In Section \ref{sec:existence}, we prove the existence of an hyper envy-free division under some linear conditions on the measures $\mu_i$.  This gives a new proof of Barbanel's theorem. A direct consequence of our construction gives a bound on $\delta$. At last, in Section \ref{sec:test} we solve the open question asked by Barbanel. Furthermore, when all the measures are given by piecewise constant density functions we show how to construct effectively an hyper envy-free fair division. This gives a method to construct a partition such that $\mu_i(X_j) \, r_{ij} \, p_j$.\\

\section{Our toolbox}\label{sec:toolbox}
\begin{Def}
A matrix $M=(\mu_i(X_j))$ is said to be a \emph{sharing matrix} when \mbox{$X=\sqcup_{i=1}^nX_i$} is a partition of $X$.
\end{Def}

A sharing matrix is a \emph{row stochastic matrix}, this means that each coefficients are nonnegative and  the sum of the coefficients of each row is equal to 1. In the following, we will use classical results about this kind of matrices. We recall below without proofs these results.

\begin{Lem}\label{lem:sto}

We denote by $e$ the vector $(1,\ldots,1)^T$. If $S=(s_{ij})$ and $T=(t_{ij})$ are matrices, and $s, t \in \RR$ are such that  $S \, e=s \, e$ and $T \, e=t \, e$, then:
\begin{enumerate}
\item  $N=S \, T$ is a matrix such that $N \, e=st \, e$.
\item If $S$ is invertible,  then $s\neq 0$ and $S^{-1}$ is such that $S^{-1} \, e= \frac{1}{s} \, e$.
\end{enumerate}
\end{Lem}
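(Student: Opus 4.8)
The plan is to prove both statements by direct manipulation of the defining relations $Se = se$ and $Te = te$, using only associativity of matrix multiplication and the fact that $e \neq 0$.

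For part (1), I would compute $Ne$ by regrouping: $Ne = (ST)e = S(Te) = S(te) = t(Se) = t(se) = st\,e$, where the scalars $s$ and $t$ pull out of the matrix products since they are real numbers. This is a one-line calculation with no obstacle.

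For part (2), assume $S$ is invertible. First I would show $s \neq 0$ by contradiction: if $s = 0$, then $Se = 0 \cdot e = 0$, so $e$ lies in the kernel of $S$; but $e \neq 0$, contradicting invertibility. Hence $s \neq 0$ and in particular $1/s$ makes sense. Then, starting from $Se = se$ and applying $S^{-1}$ on the left of both sides, I get $e = S^{-1}(se) = s\,(S^{-1}e)$, and dividing by the nonzero scalar $s$ yields $S^{-1}e = \tfrac{1}{s}\,e$, as claimed.

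There is no real obstacle here; the only point requiring a moment's care is the observation that invertibility forces $s \neq 0$ before one is allowed to write $1/s$. Everything else is routine scalar bookkeeping inside matrix products.
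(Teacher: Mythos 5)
Your proof is correct. The paper states this lemma without proof (it is listed among classical facts about row-stochastic-type matrices recalled ``without proofs''), and your argument --- pulling scalars through the products for part (1), and for part (2) noting that $s=0$ would put $e\neq 0$ in $\ker(S)$ before applying $S^{-1}$ to both sides of $Se=se$ --- is exactly the standard verification one would supply.
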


In the following we define a new measure. This measure will be usefull to write each $\mu_i(X_j)$ in term of the same measure.

\begin{Def}
We denote by $\mu$ the measure $\mu=\mu_1+\cdots+\mu_n$.\\
The Radon-Nikodym derivative $d\mu_i/d\mu$ is denoted by $f_i$.
\end{Def}

The Radon-Nikodym derivative $f_i$ exists because $\mu_i$ is absolutely continuous relatively to $\mu$. By definition we have for any measurable subset $A \subset X$:
$$\mu_i(A)=\int_A f_i(x) d\mu.$$

\begin{Def}
The Dvoretzky, Wald, Wolfowitz set (DWW set) is the set of all matrices 
$$\Big(\int_X\eta_j(x) d\mu_i\Big)$$
where $\eta_1,\ldots,\eta_n$,  are positive functions such that $\eta_1(x)+\cdots+\eta_n(x)=1$.
\end{Def}

Thanks to the previous notations a matrix in the DWW set can be written
$$\Big(\int_X\eta_j(x) f_i(x)d\mu\Big).$$

The following deep result will be one of the main ingredient of our proof.

\begin{Thm}[Dvoretzky, Wald, Wolfowitz \cite{DWW}]
The DWW set is the set of sharing matrices.
\end{Thm}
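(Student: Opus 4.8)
The plan is to prove the two inclusions separately. The inclusion ``every sharing matrix lies in the DWW set'' is immediate: given a partition $X=\sqcup_{j=1}^n X_j$, take $\eta_j=\mathbf{1}_{X_j}$ (interpreting ``positive'' as ``nonnegative'', as is standard here); these sum to $1$ pointwise and $\int_X\eta_j\,d\mu_i=\mu_i(X_j)$. The substantive direction is the converse, a Lyapunov-type convexity statement that hinges on non-atomicity of the $\mu_i$.

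For the hard inclusion I would fix $\eta_1,\dots,\eta_n\ge 0$ with $\sum_j\eta_j\equiv 1$ and set $A=\bigl(\int_X\eta_j f_i\,d\mu\bigr)$. Consider the set $\mathcal{C}\subseteq L^\infty(X,\mu)^n$ of all tuples $\zeta=(\zeta_1,\dots,\zeta_n)$ with $\zeta_j\ge 0$ a.e., $\sum_j\zeta_j=1$ a.e., and $\int_X\zeta_j f_i\,d\mu=A_{ij}$ for all $i,j$. This set is convex, nonempty (it contains $\eta$), contained in the unit ball of $L^\infty(X,\mu)^n$ (each $\zeta_j$ automatically satisfies $0\le\zeta_j\le 1$), and weak-$*$ closed, since each defining condition is an inequality or equality tested against a fixed $L^1$ function — $\mathbf{1}_A$ for ``$\zeta_j\ge 0$'' and ``$\sum_j\zeta_j=1$'', and $f_i$ for the moment conditions, using that $f_i\in L^1(\mu)$ because $\int_X f_i\,d\mu=\mu_i(X)=1$. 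By Banach--Alaoglu $\mathcal{C}$ is weak-$*$ compact, so by Krein--Milman it has an extreme point $\zeta^\ast$.

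The key claim is that $\zeta^\ast$ is \emph{crisp}, i.e.\ each $\zeta^\ast_j$ is a.e.\ an indicator function; granting this, $\zeta^\ast_j=\mathbf{1}_{X_j}$ for a partition $X=\sqcup_j X_j$ and $\mu_i(X_j)=\int_X\zeta^\ast_j f_i\,d\mu=A_{ij}$, so $A$ is a sharing matrix. To prove the claim, suppose $\zeta^\ast$ is not crisp. Since $\sum_j\zeta^\ast_j=1$ with $0\le\zeta^\ast_j\le 1$, wherever one coordinate lies strictly in $(0,1)$ at least two do; hence there are a measurable set $S$ with $\mu(S)>0$, an $\varepsilon>0$, and indices $a\ne b$ with $\varepsilon\le\zeta^\ast_a,\zeta^\ast_b\le 1-\varepsilon$ on $S$. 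Look at the linear map $T\colon L^\infty(S,\mu)\to\RR^n$, $T(h)=\bigl(\int_S h f_i\,d\mu\bigr)_i$, whose image has dimension $\le n$; since the $\mu_i$ are non-atomic, $S$ splits into arbitrarily many pieces of positive measure, so $L^\infty(S,\mu)$ is infinite-dimensional and $\ker T\neq\{0\}$. Choose $h\in\ker T$, $h\neq 0$, rescaled so $\|h\|_\infty\le\varepsilon$, extended by zero off $S$, and set $\zeta^{\pm}_a=\zeta^\ast_a\pm h$, $\zeta^{\pm}_b=\zeta^\ast_b\mp h$, $\zeta^{\pm}_k=\zeta^\ast_k$ otherwise. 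Then $\sum_j\zeta^{\pm}_j=1$, the bound on $\|h\|_\infty$ keeps every coordinate in $[0,1]$, and $h\in\ker T$ gives $\zeta^{\pm}\in\mathcal{C}$; as $\zeta^\ast=\tfrac12(\zeta^++\zeta^-)$ with $\zeta^+\ne\zeta^-$, this contradicts extremality.

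The main obstacle is exactly this crispness step: manufacturing a perturbation $h$ that is simultaneously nonzero, supported where two coordinates are fractional, small enough to preserve nonnegativity, and annihilated by every $f_i$ on $S$. The existence of such an $h$ is where non-atomicity of the measures is indispensable — for a purely atomic measure the argument breaks down and the statement is genuinely false. A secondary point requiring care is checking that all the constraints cutting out $\mathcal{C}$ are really weak-$*$ closed, which is what makes $f_i\in L^1(\mu)$ relevant.
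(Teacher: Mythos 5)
The paper offers no proof of this statement: it is quoted verbatim from the cited work of Dvoretzky, Wald and Wolfowitz and used as a black box (``the following deep result''), so there is nothing internal to compare you against. Your argument is a genuine, self-contained proof and it is correct: the easy inclusion via $\eta_j=\mathbf{1}_{X_j}$ is fine (and reading ``positive'' as ``nonnegative'' is the intended reading, since otherwise indicators would not qualify), and the hard inclusion is the standard Lyapunov-convexity route --- weak-$*$ compactness of the constraint set $\mathcal{C}$ by Banach--Alaoglu, an extreme point by Krein--Milman, and a two-coordinate perturbation supported in the kernel of the finite-rank moment map to show the extreme point is crisp. All the load-bearing steps check out: $f_i\in L^1(\mu)$ makes the moment constraints weak-$*$ continuous, the order constraints are weak-$*$ closed because they can be tested against nonnegative $L^1$ functions, non-atomicity of $\mu=\sum_i\mu_i$ makes $L^\infty(S,\mu)$ infinite-dimensional so that $\ker T\neq\{0\}$, and the $\varepsilon$-rescaling of $h$ keeps $\zeta^{\pm}$ in $[0,1]$. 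Two points you pass over quickly but which are routine: the existence of the set $S$ requires a countable pigeonhole over pairs $(a,b)$ and levels $\varepsilon=1/m$ covering the non-crisp set, and at the end the a.e.-indicators $\zeta^\ast_j$ must be adjusted on a $\mu$-null set to yield an honest partition, which changes no $\mu_i(X_j)$ since each $\mu_i\ll\mu$. What your approach buys relative to the paper is an actual proof of the key ingredient rather than a citation; the cost is that it invokes functional-analytic machinery (Alaoglu, Krein--Milman) that the paper deliberately keeps out of view.
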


This theorem is classical when we study fair division problems. As a corollary we get that the set of sharing matrices is a convex set. Barbanel's proof uses only this corollary.\\

In his book Barbanel gives the following lemma, see \cite[Lemma 9.10]{Barbanelbook}.
\begin{Lem}\label{lem:prop_f_i}
The functions $f_i$ satisfy:
$$f_1(x)+\cdots + f_n(x)=1,$$
$$0\leq f_i(x) \leq 1, \, \forall x \in X.$$
\end{Lem}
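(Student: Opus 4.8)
The plan is to use the defining property of the Radon--Nikodym derivative together with the fact that $\mu$ is the sum of the $\mu_i$; no deep input is needed. First I would record that each $\mu_i$ is genuinely absolutely continuous with respect to $\mu$, so that $f_i = d\mu_i/d\mu$ is well defined: if $\mu(A)=0$ then, since every $\mu_i$ is a nonnegative measure and $\mu(A)=\sum_{i=1}^n \mu_i(A)$, each summand $\mu_i(A)$ must vanish. Hence $f_i\in L^1(\mu)$ exists and satisfies $\mu_i(A)=\int_A f_i\,d\mu$ for all measurable $A\subset X$.

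For the identity $f_1+\cdots+f_n=1$, I would test against an arbitrary measurable set $A\subset X$:
$$\int_A\Big(\sum_{i=1}^n f_i\Big)\,d\mu=\sum_{i=1}^n\int_A f_i\,d\mu=\sum_{i=1}^n\mu_i(A)=\mu(A)=\int_A 1\,d\mu.$$
Since two $\mu$-integrable functions with the same integral over every measurable set coincide $\mu$-almost everywhere, this gives $\sum_{i=1}^n f_i=1$ $\mu$-a.e.

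For the bounds, $f_i\ge 0$ $\mu$-a.e. because otherwise the set $\{f_i<0\}$ would have positive $\mu$-measure and then $\mu_i(\{f_i<0\})=\int_{\{f_i<0\}}f_i\,d\mu<0$, contradicting that $\mu_i$ is a measure. Combining this with the identity just proved, $f_i=1-\sum_{j\neq i}f_j\le 1$ $\mu$-a.e. Finally, since only finitely many almost-everywhere statements are involved, I would collect the corresponding $\mu$-null sets into a single null set $E$ and redefine every $f_i$ on $E$ (for instance putting $f_i\equiv 1/n$ on $E$); this changes no integral $\int_A f_i\,d\mu$, hence no value $\mu_i(A)$, and yields representatives for which the two displayed relations hold for every $x\in X$.

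The only ``obstacle'' is a bookkeeping one: the Radon--Nikodym derivative is defined only up to a $\mu$-null set, so the quantifier ``for all $x\in X$'' in the statement must be understood as ``for a suitable choice of versions of the $f_i$'', and one has to aggregate the finitely many exceptional null sets before the final clean-up. There is no analytic difficulty beyond this.
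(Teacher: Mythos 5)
Your proof is correct. The paper itself gives no proof of this lemma (it is quoted from Barbanel's book, Lemma 9.10, without argument), and your derivation --- $\mu_i\ll\mu$ because $\mu=\sum_i\mu_i$ with each $\mu_i\ge 0$, then testing $\sum_i f_i$ against arbitrary measurable sets and handling nonnegativity via the set $\{f_i<0\}$ --- is exactly the standard argument one would supply, including the appropriate care about choosing versions of the Radon--Nikodym derivatives so that the pointwise statement holds for every $x$.
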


We can then construct a sharing matrix thanks to the density functions $f_i$. The following Gram matrix will be the basis of our construction.

\begin{Cor}
The Gram matrix $$G(f_1,\ldots,f_n)=\Big(\int_Xf_j(x) f_i(x)d\mu\Big),$$
is a sharing matrix.
\end{Cor}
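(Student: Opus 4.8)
The plan is to recognize the Gram matrix as a member of the DWW set and then invoke the Dvoretzky--Wald--Wolfowitz theorem. First I would rewrite the entries: since the Radon--Nikodym property gives $d\mu_i = f_i\, d\mu$, for any choice of nonnegative functions $\eta_1,\ldots,\eta_n$ one has $\int_X \eta_j(x) f_i(x)\, d\mu = \int_X \eta_j(x)\, d\mu_i$. Specializing to $\eta_j = f_j$ shows that the $(i,j)$ entry of $G(f_1,\ldots,f_n)$ is exactly $\int_X f_j\, d\mu_i = \int_X f_j f_i\, d\mu$, so $G(f_1,\ldots,f_n)$ has precisely the shape of a generic element of the DWW set, with the role of the partition-of-unity functions $\eta_1,\ldots,\eta_n$ played by the densities $f_1,\ldots,f_n$ themselves.

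Next I would check that $(f_1,\ldots,f_n)$ is an admissible tuple. By Lemma~\ref{lem:prop_f_i} we have $f_1(x)+\cdots+f_n(x)=1$ for every $x\in X$ and $0\le f_i(x)\le 1$, so the $f_i$ form a measurable partition of unity. Applying the Dvoretzky--Wald--Wolfowitz theorem, which identifies the DWW set with the set of sharing matrices, one concludes that $G(f_1,\ldots,f_n)$ is a sharing matrix; that is, there exists a partition $X=\sqcup_{i=1}^n X_i$ with $\mu_i(X_j)=\int_X f_j f_i\, d\mu$.

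The only delicate point is that the definition of the DWW set asks for \emph{strictly} positive functions $\eta_j$, whereas the densities $f_i$ may vanish on sets of positive measure. I expect this to be the main (and essentially only) obstacle, and it is easily circumvented: replace $f_i$ by $f_i^{\varepsilon}=(f_i+\varepsilon)/(1+n\varepsilon)$, which is strictly positive and still sums to $1$ pointwise; this produces a genuine element of the DWW set, hence a sharing matrix, for every $\varepsilon>0$, and letting $\varepsilon\to 0$ exhibits $G(f_1,\ldots,f_n)$ as a limit of sharing matrices. Since the set of sharing matrices is closed --- this closedness being part of the content of the DWW theorem (the DWW set is described there as \emph{the} set of sharing matrices, and it is a compact convex set) --- the limit is again a sharing matrix. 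Alternatively, one simply observes that the equality of the two sets asserted by the theorem already forces nonnegative partitions of unity to be allowed, so no perturbation is needed at all.
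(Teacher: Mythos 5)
Your proof is correct and follows the same route as the paper: take $\eta_j=f_j$, note via Lemma \ref{lem:prop_f_i} that these are nonnegative and sum to $1$, and invoke the Dvoretzky--Wald--Wolfowitz theorem. The worry about strict positivity is unnecessary here (the paper's ``positive'' is meant as nonnegative, exactly the conclusion of Lemma \ref{lem:prop_f_i}), but your $\varepsilon$-perturbation fallback is harmless.
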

 
 We remark that the matrix $G(f_1,\ldots,f_n)$ is well defined since by Lemma \ref{lem:prop_f_i}, $f_i \in L^{\infty}(X)$.\\

Thanks to the Dvoretsky, Wald, Wolfowitz's theorem we know that $G(f_1,\ldots,f_n)$ is a sharing matrix. Thus, there exists a partition $X=\sqcup_{i=1}^n X_i$ such that 
$$\int_X f_j(x)f_i(x)d\mu=\mu_i(X_j).$$

\begin{Rem} \label{rem_prop} The matrix $G(f_1,\ldots,f_n)$  gives a proportional fair division. Indeed, we apply Cauchy-Schwartz's inequality in $L^2(X)$ where 
$$<f,g>=\int_X f(x)g(x)d\mu,$$
 and we get $<1,f_i> \leq \|f_i\| \, \|1\|$.\\
However 
$$\|1\|^2=\int_X d\mu=\int_X d\mu_1+\cdots+ d\mu_n=n$$
 and $$<1,f_i>=\int_X f_i d\mu=\mu_i(X)=1.$$
  This gives $\| f_i\|^2 \geq 1/n$. Thus all diagonal elements of  $G(f_1,\ldots,f_n)$ satisfy 
  $$\|f_i\|^2=\mu_i(X_i) \geq 1/n.$$
   This gives a proportional fair division.\\

\end{Rem}

This matrix has also other usefull properties. For example, this matrix is symmetric. Now, we recall a classical  property of a Gram matrix.  

\begin{Lem}\label{lem:nullspace}
Let  $x$ denote the vector $x=(x_1, \ldots , x_n)^T$, and $\ker(G)$ the kernel of a matrix $G$.  We have the 
equivalence 
$$ x \in \ker\big(G(f_1,\ldots,f_n)\big) \iff \sum_{i=1}^n x_i f_i  = 0  \textrm{ in } L^2(X) \iff \sum_{i=1}^n x_i \mu_i  = 0.$$
\end{Lem}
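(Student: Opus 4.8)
The plan is to prove the two implications constituting the chain of equivalences. First I would show that $x \in \ker\big(G(f_1,\ldots,f_n)\big)$ is equivalent to $\sum_{i=1}^n x_i f_i = 0$ in $L^2(X)$. The standard argument: consider the quadratic form $x^T G x$. Writing out the $(i,j)$ entry as $\int_X f_i f_j\, d\mu$ and using bilinearity of the inner product gives
\[
x^T G(f_1,\ldots,f_n)\, x = \int_X \Big(\sum_{i=1}^n x_i f_i(x)\Big)^2 d\mu = \Big\| \sum_{i=1}^n x_i f_i \Big\|^2.
\]
Since $G$ is symmetric and positive semidefinite, $Gx = 0$ iff $x^T G x = 0$ (the forward direction is immediate; for the converse, if $x^T G x = 0$ then by positive semidefiniteness $x$ minimizes the quadratic form, so its gradient $2Gx$ vanishes — alternatively invoke Cauchy--Schwarz, $|y^T G x|^2 \le (y^T G y)(x^T G x) = 0$ for all $y$, hence $Gx=0$). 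Combined with the displayed identity, $Gx = 0 \iff \big\| \sum_i x_i f_i \big\|^2 = 0 \iff \sum_i x_i f_i = 0$ in $L^2(X)$.

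For the second equivalence, I would pass through the measures. If $\sum_i x_i f_i = 0$ in $L^2(X)$, then for every measurable $A \subset X$ we have $\sum_i x_i \mu_i(A) = \sum_i x_i \int_A f_i\, d\mu = \int_A \big(\sum_i x_i f_i\big)\, d\mu = 0$, so $\sum_i x_i \mu_i = 0$ as a signed measure. Conversely, if $\sum_i x_i \mu_i = 0$, then the function $g := \sum_i x_i f_i \in L^1(\mu)$ satisfies $\int_A g\, d\mu = 0$ for all measurable $A$; a standard argument (test against $A = \{g > 0\}$ and $A = \{g < 0\}$) forces $g = 0$ $\mu$-a.e., and since each $f_i$ is bounded, $g \in L^2(X)$ and $\sum_i x_i f_i = 0$ in $L^2(X)$.

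I do not anticipate a serious obstacle here; everything is standard functional analysis and measure theory. The one point requiring a little care is the converse direction of the first equivalence — deducing $Gx = 0$ from $x^T G x = 0$ — which genuinely uses positive semidefiniteness of the Gram matrix rather than just symmetry; I would state this explicitly. The rest is bookkeeping with the Radon--Nikodym identity $\mu_i(A) = \int_A f_i\, d\mu$ from the definition of $f_i$ and the boundedness $0 \le f_i \le 1$ from Lemma \ref{lem:prop_f_i}.
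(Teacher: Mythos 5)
Your proof is correct and follows essentially the same route as the paper: both rest on the identity $x^T G x = \bigl\| \sum_i x_i f_i \bigr\|^2$ (the paper phrases it as combining the equations $\langle f_i, \sum_j x_j f_j\rangle = 0$), and both pass between $\sum_i x_i f_i = 0$ and $\sum_i x_i \mu_i = 0$ by integrating over arbitrary measurable sets. Your write-up is slightly more explicit at the two points the paper glosses over (deducing $Gx=0$ from $x^TGx=0$ via positive semidefiniteness, and the test-set argument for the converse of the measure identity), but these are refinements of the same argument, not a different approach.
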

\begin{proof} 
We have: $$G(f_1,\ldots,f_n)\,x=\big(<f_1,\sum_{j=1}^n x_jf_j >,\ldots,<f_n,\sum_{j=1}^n x_jf_j >\big)^{T}.$$
 Thus 
 $x \in \ker\big(G(f_1,\ldots,f_n)\big)$ means $<  f_i , \sum_{j=1}^n x_jf_j > =0, \forall i=1, \ldots, n.$ Combining these equalities we get $\| \sum_{j=1}^n x_j f_j \|^2 = 0$ which implies $\sum_{j=1}^n x_j f_j  = 0$ in $L^2(X)$. The converse implication is straightforward. \\
 Furthermore, we have the following equivalence:
\begin{eqnarray*} 
 \sum_{i=1}^n x_i f_i  = 0  \textrm{ in } L^2(X) &\iff & \textrm{ for all measurable set  } A, \quad \int_A \sum_{i=1}^n x_i f_i(x)d\mu=0\\
 &\iff &\textrm{ for all measurable set  } A, \quad \sum_{i=1}^n x_i \int_A f_i(x)d\mu =0 \\
 &\iff &\textrm{ for all measurable set  } A, \quad \sum_{i=1}^n x_i \mu_i(A)=0. 
 \end{eqnarray*}
 This gives the desired result.
\end{proof}

In terms of the measures $\mu_i$ this property implies that $G(f_1,\ldots,f_n)$ is nonsingular if and only if the measures $\mu_i$ are independent. \\

\begin{Lem}\label{lem:G+}
We denote by $G(f_1,\ldots,f_n)^{+}$ the Moore-Penrose pseudo-inverse of $G(f_1,\ldots,f_n)$ and  by $e$ the vector $e=(1,\ldots,1)^T$. We have
$$G(f_1,\ldots,f_n)^{+} \, e=e.$$
\end{Lem}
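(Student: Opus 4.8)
The plan is to first identify $e$ as an eigenvector of $G:=G(f_1,\ldots,f_n)$ for the eigenvalue $1$, and then use the fact that the Moore--Penrose pseudo-inverse of a symmetric positive semidefinite matrix inverts every nonzero eigenvalue on its eigenspace. So the first step is a direct computation of $Ge$: by bilinearity of the inner product $\langle f,g\rangle=\int_X f g\, d\mu$ on $L^2(X)$ and Lemma \ref{lem:prop_f_i}, the $i$-th coordinate of $Ge$ is
$$\sum_{j=1}^n \langle f_i,f_j\rangle=\Big\langle f_i,\sum_{j=1}^n f_j\Big\rangle=\langle f_i,1\rangle=\mu_i(X)=1,$$
hence $Ge=e$. (This is of course consistent with $G$ being row-stochastic, as in Lemma \ref{lem:sto} with $s=1$, but the Gram structure makes it immediate.)

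The second step passes to the pseudo-inverse. Since $G$ is real symmetric and positive semidefinite, it admits a spectral decomposition $G=\sum_k\lambda_k P_k$ with the $P_k$ orthogonal projections onto pairwise orthogonal eigenspaces, and $G^{+}=\sum_{\lambda_k\neq 0}\lambda_k^{-1}P_k$. Because $e$ lies in the eigenspace for $\lambda=1$, we get $P_1e=e$ and $P_ke=0$ for the remaining $k$, so $G(f_1,\ldots,f_n)^{+}\,e=1^{-1}e=e$. Alternatively, avoiding any spectral machinery, one can argue from the defining relations: $Ge=e$ shows $e\in\Ima(G)$, hence $GG^{+}e=e$ since $GG^{+}$ is the orthogonal projector onto $\Ima(G)$; combining with $Ge=e$ gives $G(e-G^{+}e)=0$, i.e.\ $e-G^{+}e\in\ker(G)$; but $e\in\Ima(G)$ and $G^{+}e\in\Ima(G^{+})=\Ima(G)$, so $e-G^{+}e\in\Ima(G)\cap\ker(G)=\{0\}$ because $\ker(G)=\Ima(G)^{\perp}$ for symmetric $G$. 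Either route concludes the proof.

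I do not anticipate a genuine obstacle. The only subtlety is the possibly singular case (when the $\mu_i$ are linearly dependent, by Lemma \ref{lem:nullspace}), which is precisely why the pseudo-inverse rather than the inverse appears; the second argument above treats the singular and nonsingular cases uniformly, whereas if $G$ is invertible one could instead simply cite Lemma \ref{lem:sto}(2) with $s=1$. Thus the ``hard part'' is merely to invoke the eigenspace/orthogonality facts in the precise form valid for symmetric positive semidefinite matrices, which is routine.
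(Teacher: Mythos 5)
Your proof is correct, and it takes a genuinely different route from the paper's. The paper proves the lemma via the limit representation $G^{+}=\lim_{\rho\to 0^{+}}G^{T}(GG^{T}+\rho I)^{-1}$: it uses Lemma \ref{lem:sto} to see that $GG^{T}+\rho I$ maps $e$ to $(1+\rho)e$, inverts, and passes to the limit. You instead establish $Ge=e$ directly from the Gram structure (using $\sum_j f_j=1$ from Lemma \ref{lem:prop_f_i}, which is a nice shortcut since it does not even require knowing that $G$ is a sharing matrix) and then conclude either by the spectral decomposition of a symmetric positive semidefinite matrix or by the projector identities $GG^{+}=\Pi_{\Ima(G)}$ and $\Ima(G^{+})=\Ima(G)$, $\ker(G)=\Ima(G)^{\perp}$. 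Both of your variants are sound, including the singular case; your second variant is essentially the same mechanism the paper itself later invokes in the proof of Theorem \ref{thm1} (where $GG^{+}=\Pi_{\Ima(G)}$ is used), so it fits the paper's toolkit well and avoids the limit formula $(\star)$ entirely. What the paper's route buys is that it only needs $G$ to be symmetric and row stochastic, never explicitly using positive semidefiniteness beyond invertibility of $GG^{T}+\rho I$; what your route buys is a shorter, limit-free argument that makes transparent why the eigenvalue $1$ survives pseudo-inversion.
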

\begin{proof}
For a matrix $G$ we have the following classical property, see e.g. \cite[Ex. 25, p.115]{Greville}: 
$$(\star) \quad G^{+}=\lim\limits_{\substack{\rho \to 0 \\ \rho>0}} G^T(GG^T+\rho I)^{-1}.$$ As $G(f_1,\ldots,f_n)$ is symmetric and row stochastic, then  $G(f_1,\ldots,f_n)^{T}$ is also row stochastic. Thus, by Lemma \ref{lem:sto}, the product $G(f_1,\ldots,f_n) \, G(f_1,\ldots,f_n)^T$ is a row stochastic matrix. Then, 
$$\big(G(f_1,\ldots,f_n) \, G(f_1,\ldots,f_n)^T + \rho \,  I \big) \, e=(1+\rho) \,  e.$$
Since $\rho > 0$,  the matrix 
$\big( G(f_1,\ldots,f_n) \, G(f_1,\ldots,f_n)^T + \rho \,  I \big)$ is nonsingular. Indeed, $G(f_1,\ldots,f_n) \, G(f_1,\ldots,f_n)^T$ is a semi-definite positive matrix and thus the matrix $G(f_1,\ldots,f_n)\, G(f_1,\ldots,f_n)^T + \rho \,  I$ is definite positive. Using again Lemma \ref{lem:sto},  we deduce that $\big(G(f_1,\ldots,f_n) \, G(f_1,\ldots,f_n)^T+ \rho   \, I\big)^{-1} \,e=\frac{1}{1+\rho} \, e.$  
 Since $G(f_1,\ldots,f_n)^{T}$ is row stochastic, then property $(\star)$ gives the desired conclusion.
\end{proof}

The next lemma shows how to get new sharing matrices. This idea was already present in Webb's algorithm for computing a super envy-free fair division, see \cite{Webb}.

\begin{Lem}\label{lem:stab-sto}
Let $M \in M_n(\RR)$ be a sharing matrix and $S \in M_n(\RR)$ be a row stochastic matrix. Then $M\,S$ is a sharing matrix.
\end{Lem}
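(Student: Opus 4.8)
The plan is to exploit the Dvoretzky, Wald, Wolfowitz theorem, which identifies the set of sharing matrices with the DWW set. So the first step is to write $M$ in DWW form: since $M$ is a sharing matrix, there exist positive functions $\eta_1,\ldots,\eta_n$ with $\eta_1(x)+\cdots+\eta_n(x)=1$ such that $M=\big(\int_X \eta_j(x)\,d\mu_i\big)$.

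Next I would simply compute the entries of $MS$. Writing $S=(s_{jk})$, the $(i,k)$ entry of $MS$ is $\sum_{j=1}^n \big(\int_X \eta_j(x)\,d\mu_i\big)s_{jk} = \int_X \Big(\sum_{j=1}^n s_{jk}\,\eta_j(x)\Big)\,d\mu_i$. This suggests setting $\theta_k(x) := \sum_{j=1}^n s_{jk}\,\eta_j(x)$, so that $MS=\big(\int_X \theta_k(x)\,d\mu_i\big)$.

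The remaining step is to check that $\theta_1,\ldots,\theta_n$ are again admissible functions for the DWW set. Since $S$ is row stochastic, the coefficients $s_{jk}$ are nonnegative, so each $\theta_k$ is a nonnegative combination of the nonnegative functions $\eta_j$, hence positive. Moreover $\sum_{k=1}^n \theta_k(x) = \sum_{j=1}^n \eta_j(x)\Big(\sum_{k=1}^n s_{jk}\Big) = \sum_{j=1}^n \eta_j(x) = 1$, using that each row of $S$ sums to $1$. Therefore $MS$ lies in the DWW set, and by the Dvoretzky, Wald, Wolfowitz theorem it is a sharing matrix.

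There is no real obstacle here: once one recalls the DWW characterization, the argument is a one-line change of functions, the only thing to be careful about being that row stochasticity of $S$ is exactly what is needed to preserve both positivity and the normalization $\sum_k \theta_k = 1$. (One could instead argue directly by producing a partition, using convexity of the set of sharing matrices together with the description of the rows of $S$ as convex combinations of standard basis vectors, but the DWW route is shorter and cleaner.)
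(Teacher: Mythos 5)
Your proof is correct and follows essentially the same route as the paper: the paper takes $\eta_j=\sum_k s_{kj}\rchi_{X_k}$ (the indicator functions being the natural DWW representation of a sharing matrix) and applies the Dvoretzky--Wald--Wolfowitz theorem exactly as you do. The only cosmetic difference is that you invoke the DWW theorem to produce abstract admissible functions for $M$ rather than using the indicators of the partition directly.
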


\begin{proof}
Suppose that the sharing matrix is given by $M=\big(\mu_i(X_j)\big)$ and  denote by $S=(s_{ij})$ the stochastic matrix. We set $\eta_j=\sum_{k=1}^n s_{kj}\rchi_{X_k}$, where $\rchi_{X_k}$ is the indicator function associated to $X_k$.\\
We remark that $\eta_j(x)$ is positive and $\eta_1(x)+\cdots+\eta_n(x)=1$ because $X=\sqcup_{i=1}^n X_i$ is a partition and $S$ is a row stochastic matrix.\\
Furthermore, we have 
$$\int_X \eta_j(x)f_i(x) d\mu = \sum_{k=1}^n s_{kj} \int_X \rchi_{X_k}(x) f_i(x) d\mu
=\sum_{k=1}^n \mu_i(X_k) s_{kj} 
.$$
This gives: $\big( \int \eta_j(x)f_i(x) d\mu \big)_{ij}=M\,S$. Then by the Dvoretzky, Wald, Wolfowitz's theorem $M\,S$ is a sharing matrix.
\end{proof}

\section{Existence of hyper envy-free division and an explicit bound}\label{sec:existence}
\subsection{Sufficient conditions}\label{subsec:cond-suff}

\begin{Def}\label{def:propre}
A matrix $K=(k_{ij})$ is said to be proper relatively to the measures $\mu_1$, \ldots, $\mu_n$ when the two following conditions hold:
\begin{enumerate}
\item for all $i=1,\ldots,n$, $\sum_{j=0}^n k_{ij}=0$,
\item if $\sum_{i=0}^n \lambda_i \mu_i=0$, where $\lambda_i \in \RR$, then  $\sum_{i=0}^n \lambda_i k_{ij}=0$, for all $j=1,\ldots,n$.
\end{enumerate}
\end{Def}

With this definition we can state our extension of Barbanel's theorem. 

\begin{Thm}\label{thm1}
If we denote by $P$ the matrix where each row is $(p_1, \ldots,p_n)$, and $\sum_{i=1}^n p_i=1$, with $p_i > 0$, and $K$ is a proper matrix, then for

$$\displaystyle 0\leq \delta \leq \mathcal{B}:=\dfrac{ \underset{i}{\min}(p_i) }{\underset{ij}{\max}|\big(G(f_1,\ldots,f_n)^{+} K\big)_{ij}|}$$

we have:
\begin{enumerate}
\item $G(f_1,\ldots,f_n)^{+}\,(P+\delta K)$ is a row stochastic matrix.
\item $G(f_1,\ldots,f_n)\,G(f_1,\ldots,f_n)^{+}\,(P+\delta K)=(P+\delta K)$.
\item $P+\delta K$ is a sharing matrix, i.e. there exists an hyper envy-free fair division relatively to $K$ and $p$, with $\delta \leq \mathcal{B}$.
\end{enumerate} 
\end{Thm}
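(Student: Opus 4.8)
Write $G := G(f_1,\ldots,f_n)$ for brevity. The plan is to reduce the three claims to the two identities $G\,e = e$ (the Gram matrix is row stochastic, by the Corollary) and $G^{+}\,e = e$ (Lemma \ref{lem:G+}), together with the structural observation that $P = e\,p^{T}$, where $e = (1,\ldots,1)^{T}$ and $p = (p_1,\ldots,p_n)^{T}$. From $P = e\,p^{T}$ one gets at once $G^{+}P = (G^{+}e)\,p^{T} = e\,p^{T} = P$ and likewise $G\,P = P$; and from $\sum_{j}k_{ij} = 0$ one has $K\,e = 0$. Consequently $G^{+}(P + \delta K) = P + \delta\,G^{+}K$, a matrix whose $(i,j)$ entry is $p_{j} + \delta\,(G^{+}K)_{ij}$. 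These preliminary computations do most of the work.

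For item (1) I would verify separately the two defining properties of a row stochastic matrix. Row sums: $G^{+}(P + \delta K)\,e = G^{+}(P e + \delta K e) = G^{+}e = e$, using $P e = e$ and $K e = 0$. Nonnegativity: the entry $p_{j} + \delta\,(G^{+}K)_{ij}$ is $\geq 0$ as soon as $\delta\,|(G^{+}K)_{ij}| \leq p_{j}$, and this holds for all $0 \leq \delta \leq \mathcal{B}$ because $\mathcal{B} \leq p_{j}/|(G^{+}K)_{ij}|$ for every pair $(i,j)$ (the numerator $\min_{i}p_{i}$ being $\leq p_{j}$ and the denominator $\max_{ij}|(G^{+}K)_{ij}|$ being $\geq |(G^{+}K)_{ij}|$); in the degenerate case $G^{+}K = 0$ one reads $\mathcal{B} = +\infty$ and the entry is simply $p_{j} > 0$.

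For item (2), using $G^{+}(P + \delta K) = P + \delta\,G^{+}K$ and $G\,P = P$ we get $G\,G^{+}(P + \delta K) = P + \delta\,G\,G^{+}K$, so the statement is equivalent to $G\,G^{+}K = K$. Now $G\,G^{+}$ is the orthogonal projector onto the column space of $G$, which — $G$ being symmetric — equals $\ker(G)^{\perp}$; hence $G\,G^{+}K = K$ precisely when every column of $K$ lies in $\ker(G)^{\perp}$. By Lemma \ref{lem:nullspace}, $x \in \ker(G)$ means $\sum_{i}x_{i}\mu_{i} = 0$, and then condition (2) of properness (Definition \ref{def:propre}) gives $\sum_{i}x_{i}k_{ij} = 0$ for all $j$, i.e. $x^{T}K = 0$: exactly the required orthogonality. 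I expect this translation of the ``proper'' hypothesis into the projection identity, through Lemma \ref{lem:nullspace}, to be the conceptual heart of the proof — although, once the dictionary between $\ker(G)$ and linear relations among the $\mu_{i}$ is in place, the step is short.

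Finally item (3) is assembled from the pieces: $G$ is a sharing matrix (the Gram matrix Corollary), $G^{+}(P + \delta K)$ is row stochastic by (1), so Lemma \ref{lem:stab-sto} makes $G \cdot \big(G^{+}(P + \delta K)\big)$ a sharing matrix, and by (2) this product equals $P + \delta K$. Unwinding the definition of a sharing matrix gives a partition $X = \sqcup_{i=1}^{n} X_{i}$ with $\mu_{i}(X_{j}) = p_{j} + \delta\,k_{ij}$; picking any $\delta$ with $0 < \delta \leq \mathcal{B}$ (note $\mathcal{B} > 0$ since $\min_{i}p_{i} > 0$) yields an hyper envy-free division relative to $K$ and $p$, which also makes $\mathcal{B}$ the announced explicit bound on $\delta$.
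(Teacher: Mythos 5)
Your proof is correct and follows essentially the same route as the paper's: the same row-sum and nonnegativity computation for item (1), the same translation of properness of $K$ into the projection identity $GG^{+}(P+\delta K)=P+\delta K$ via Lemma \ref{lem:nullspace} for item (2), and the same assembly of item (3) from Lemma \ref{lem:stab-sto}. The only cosmetic difference is that you exploit the rank-one factorization $P=e\,p^{T}$ to split off $P$ and reduce item (2) to $GG^{+}K=K$, whereas the paper establishes the inclusion $\Ima(P+\delta K)\subset\Ima(G)$ directly; the underlying argument is identical.
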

The last item is an effective statement of Barbanel's theorem. Our approach allows us to give a bound $\mathcal{B}$ on $\delta$.\\
The case $\delta=0$ is classical and follows form the convexity of the set of sharing matrices. 

\begin{proof}
By Lemma \ref{lem:G+} we have $G(f_1,\ldots,f_n)^{+} e= e.$ Using  Lemma \ref{lem:sto} and the properties of $P$ and $K$ we get $$G(f_1,\ldots,f_n)^{+}  (P+\delta K) \, e=e.$$
In order to prove the first item we now have to show that the coefficients of $G(f_1,\ldots,f_n)^{+} \,  (P+\delta K)$ are non-negative for $\delta$ sufficiently small.\\
We set $G(f_1,\ldots,f_n)^{+}=(g^{+}_{ij})$ and $K=(k_{ij})$.\\
 The  $ij$-th  coefficient of $G(f_1,\ldots,f_n)^{+} \,  (P+\delta K)$ is of the following form:
$$\sum_{l=1}^n g^{+}_{il} \, p_j +\delta \sum_{l=1}^ng^{+}_{il} \, k_{lj}=  p_j +\delta \sum_{l=1}^ng^{+}_{il} \, k_{lj},$$
 since $G(f_1,\ldots,f_n)^{+}$ is row stochastic.\\
Now, we denote by $c_{ij}$ the coefficients of $G(f_1,\ldots,f_n)^{+} \, K$. We have:
$$
p_j +\delta c_{ij} \geq\min(p_i) +\delta c_{ij}.
$$
Thus, if $\delta$ satisfies $\min(p_i) +\delta c_{ij} \geq 0$, we obtain the desired result.\\
This condition is trivially satisfied when $c_{ij} \geq 0$.\\
 In the general case, we have:
$\min(p_i) +\delta c_{ij} \geq 0 \iff \dfrac{min(p_i)}{|c_{ij}|}\geq \delta.$

Thus, if $\delta$ satisfies the bound given in the theorem, then all the coefficients of $G(f_1,\ldots,f_n)^{+} \,  (P+\delta K)$ are non-negative.\\

In order to simplify the notation, we denote by $G$ the Gram matrix $G(f_1,\ldots,f_n).$ In order to show the second item of the theorem, we use the well known equality satisfied by the pseudo-inverse $G^+$ : $G G^+ = \Pi_{\Ima(G)},$ where $\Pi_{\Ima(G)}$ is 
the orthogonal projection onto the image of $G$ (that we denoted by $\Ima(G)$). We are going to show the inclusion 
$\Ima(P + \delta K) \subset  \Ima(G)$ and this will imply the desired equality. 

Let $x=(x_1, \ldots, x_n)^T$ a vector in  $\ker(G)$. From Lemma \ref{lem:nullspace} we have \mbox{$\sum_{i=1}^n x_i \mu_i =0$} which implies  
$x^T K = 0$ since $K$ is a proper matrix.\\
 The matrix $G$ is symmetric, therefore $x^T G =0$ and $x^T G e= x^T e= 0$ which implies $x^T P = 0$.\\
This gives $x^T (P + \delta K) = 0$ and we conclude that $x \in \Ima(P + \delta K)^\perp $ (the orthogonal complement of $\Ima(P + \delta K)$). \\
 We have obtained the inclusion $\ker(G) \subset \Ima(P + \delta K)^\perp$.\\Thus $\ker(G)^{\perp} \supset \Ima (P+\delta K)$. Furthermore, we have $\ker(G)^{\perp}=\Ima(G^T)=\Ima(G)$, since $G$ is symmetric. This gives $\Ima(G) \supset \Ima(P+\delta K)$, and thus the second item is proved.\\

The third item is a direct consequence of the previous items and Lemma \ref{lem:stab-sto}.
\end{proof}

The previous theorem shows that if the measures are linearly independent then when can get an hyper envy-free fair division for all matrix $K=(k_{ij})$ such that $\sum_{j=1}^n k_{ij}=0$, see Definition \ref{def:propre}. Intuitively, if the measures are ``nearly" linearly dependent  then $\delta$ will be small. The following gives a precise statement for this intuition.\\

As we suppose that the measures are linearly independent, $G=G(f_1,\ldots,f_n)$ is inversible, see Lemma \ref{lem:nullspace}. We denote by $\tilde{g}_{ij}$ the coefficients of $G^{-1}$. Cramer's rule gives the following equality
$$\tilde{g}_{ij}=\dfrac{ \det(G_{ji}) }{ \det(G)  },$$
where $G_{ji}$ is the matrix where we have substituted in the matrix $G$ the $i$-th column by the column vector $e_j=(0,\ldots,0,1,0,\ldots,0)^T$.\\
Thus, if $\tilde{g}_{ij} \neq 0$ then we have 
$$0=\det(G)-\dfrac{1}{\tilde{g}_{ij}} \det(G_{ji}),$$
and by linearity of the determinant relatively to the $i$-th column we deduce:
$$0=\det\Big(G -\dfrac{1}{\tilde{g}_{ij}} E_{ji} \Big),$$
where $E_{ji}$ is the matrix with just one $1$ in the $i$-th column and $j$-th row and $0$ elsewhere.\\
We deduce that $G-1/\tilde{g}_{ij} \, E_{ij}$ is a singular matrix. We denote by $\Sigma_n$ the set of all $n$ by $n$ singular matrices.\\
It follows $\| 1/\tilde{g}_{ij} E_{ji} \| \geq d(G,\Sigma_n)$, where $\|.\|$ is a norm in the space of square matrices and $d(G,\Sigma_n)$ is the distance relatively to this norm  between the matrix $G$ and the set $\Sigma_n$. We suppose that $\|E_{ji}\|=1$. This assumption is not restrictive since all the classical norms satisfy this property. Thus
$$\dfrac{1}{|\tilde{g}_{ij}|} \geq d(G,\Sigma_n).$$
We deduce that the coefficients $c_{ij}$ of $G(f_1,\ldots,f_n)^{-1}K$ satisfy:
$$|c_{ij}|\leq n \, \underset{ij}{\max} |k_{ij}| \, \max_{ij} \tilde{g}_{ij} \leq \frac{n\, \max_{ij} |k_{ij}|}{d(G,\Sigma_n)}.$$
Thus $$\dfrac{\underset{i}{\min}(p_i)}{|c_{ij}|} \geq \underset{i}{\min}(p_i)\dfrac{d(G,\Sigma_n)}{n \, \underset{ij}{\max} |k_{ij}|}.$$

This gives the following corollary:

\begin{Cor}
Let $\|.\|$ be a norm in the space of $n\times n$ matrices such that $\|E_{ij}\|=1$ where $E_{ij}$ is the matrix with one $1$ in the $i$-th row  and $j$-th column and $0$ elsewhere. We denote by $d$ the associated distance.\\
 Let $\Sigma_n$ be the set of singular $n\times n$ matrices.\\
If the measures are linearly independent, then for all $p$ and all $K=(k_{ij})$ such that $\sum_{j=1}^n k_{ij}=0$, for all $i=1,\ldots,n$,  there exists an hyper envy-free fair division with $\delta$ satisfying the inequality:
$$\delta \leq \underset{i}{\min}(p_i) \, \dfrac{d(G,\Sigma_n)}{n\, \underset{ij}{\max} |k_{ij}|}.$$
\end{Cor}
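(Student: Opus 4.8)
The plan is to obtain the Corollary as a quantitative specialization of Theorem \ref{thm1}. First I would note that when $\mu_1,\ldots,\mu_n$ are linearly independent, condition (2) of Definition \ref{def:propre} holds vacuously: the only relation $\sum_i\lambda_i\mu_i=0$ is the trivial one, so $\sum_i\lambda_i k_{ij}=0$ automatically. Hence \emph{every} matrix $K$ with $\sum_j k_{ij}=0$ is proper relatively to the measures, and Theorem \ref{thm1} already guarantees an hyper envy-free division for every $\delta$ with $0\le\delta\le\mathcal{B}$, where $\mathcal{B}=\min_i(p_i)/\max_{ij}|(G(f_1,\ldots,f_n)^{+}K)_{ij}|$. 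Since the measures are independent, $G:=G(f_1,\ldots,f_n)$ is nonsingular by Lemma \ref{lem:nullspace}, so $G^{+}=G^{-1}$ and in particular $d(G,\Sigma_n)>0$, making the claimed bound well defined. It therefore remains only to bound $\mathcal{B}$ from below, i.e.\ to bound the entries $c_{ij}=(G^{-1}K)_{ij}$ from above.

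The crux — the step I expect to be the main obstacle — is the inequality $|\tilde{g}_{ij}|\le 1/d(G,\Sigma_n)$ for the entries $\tilde{g}_{ij}$ of $G^{-1}$, i.e.\ relating the size of the entries of the inverse to the distance from $G$ to the set of singular matrices. For this I would invoke Cramer's rule, $\tilde{g}_{ij}=\det(G_{ji})/\det(G)$ with $G_{ji}$ the matrix obtained from $G$ by replacing its $i$-th column with $e_j$. If $\tilde{g}_{ij}\ne 0$ this gives $\det(G)-(1/\tilde{g}_{ij})\det(G_{ji})=0$, and by linearity of the determinant in the $i$-th column this is precisely the statement that the rank-one perturbation $G-(1/\tilde{g}_{ij})E_{ji}$ is singular. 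Since $\|E_{ji}\|=1$ by the hypothesis on the norm, this singular matrix witnesses $1/|\tilde{g}_{ij}|=\|(1/\tilde{g}_{ij})E_{ji}\|\ge d(G,\Sigma_n)$; when $\tilde{g}_{ij}=0$ the bound is trivial.

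The remaining steps are routine. Writing $c_{ij}=\sum_{l=1}^n\tilde{g}_{il}k_{lj}$ and applying the triangle inequality yields
$$|c_{ij}|\le n\,\max_{ij}|\tilde{g}_{ij}|\,\max_{ij}|k_{ij}|\le \frac{n\,\max_{ij}|k_{ij}|}{d(G,\Sigma_n)},$$
so that $\mathcal{B}=\min_i(p_i)/\max_{ij}|c_{ij}|\ge \min_i(p_i)\,d(G,\Sigma_n)/(n\,\max_{ij}|k_{ij}|)$. The third item of Theorem \ref{thm1} then provides an hyper envy-free fair division for every $\delta$ up to this lower bound for $\mathcal{B}$, which is exactly the assertion of the Corollary. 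The only properties used of $\|\cdot\|$ are that it is a norm on $M_n(\RR)$ with $\|E_{ji}\|=1$, a normalization shared by the operator, Frobenius and max-entry norms, so the statement is genuinely norm-independent in content.
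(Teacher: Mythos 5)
Your proposal is correct and follows essentially the same route as the paper: it specializes Theorem \ref{thm1} using the fact that linear independence makes every zero-row-sum $K$ proper, and then bounds the entries of $G^{-1}$ via Cramer's rule and the observation that $G-(1/\tilde{g}_{ij})E_{ji}$ is a singular rank-one perturbation of $G$, giving $|\tilde{g}_{ij}|\le 1/d(G,\Sigma_n)$. The only (harmless) additions are your explicit treatment of the case $\tilde{g}_{ij}=0$ and the remark that $G^{+}=G^{-1}$, both of which the paper leaves implicit.
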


This bound is smaller than the previous one but it gives a natural relation between the size of $\delta$, $p$, $K$, $n$ and the linear independency of the measures.\\
The coefficient $d(G,\Sigma_n)$ describes the linear independency of the measures. 
 Indeed, we use a Gram matrix because the density functions belong to an infinite dimensional vector space and thus we cannot express the determinant of these functions in a finite basis of the ambient space. The Gram matrix encodes all the linear relations between the measures. It is natural to control how the measures are ``linearly independent" in term of the distance between this matrix and the set of singular matrices $\Sigma_n$.\\

Our corollary can be used for all classical norms in the space of $n\times n$ matrices. The choice of a norm and then of a metric corresponds to different points of view.\\
As an example, if we consider the norm $\|.\|_{\infty}$ we could measure the quality of the corresponding  fair division in term of the distance $d_{\infty}(M,I)=\|M-I\|_{\infty}$, where $I$ is the identity matrix.  Indeed, $I$ corresponds to the case where each player thinks that he or she has the maximum of his or her utility function. If $M=\big(\mu_i(X_j)\big)$ is a sharing matrix,
then 
$$d_{\infty}(M,I)=\max_{i} \sum_{j=1}^n |\mu_i(X_j) -\delta_{ij}|.$$
Then $1- d_{\infty}(M,I)$  can be seen as a kind of  Rawlsian social welfare function.\\
Thus the choice of a metric is related to a social welfare function.

\subsection{Necessary conditions}$\,$\\

In order to make more natural the previous conditions on the matrix $K$, we give the following necessary condition:

\begin{Lem}
Suppose that $P+\delta K$ is a sharing matrix, where $P$ is  the matrix where each row is $(1/n,\ldots,1/n)$ and $\delta >0$. Then $K=(k_{ij})$ is a proper matrix. 
\end{Lem}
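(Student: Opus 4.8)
The plan is to unwind the definition of a sharing matrix and then check the two defining properties of a proper matrix (Definition \ref{def:propre}) by direct computation. By hypothesis there is a partition $X=\sqcup_{i=1}^n X_i$ such that the sharing matrix $M=\big(\mu_i(X_j)\big)$ equals $P+\delta K$; since every row of $P$ is $(1/n,\ldots,1/n)$, this means $\mu_i(X_j)=\tfrac1n+\delta k_{ij}$ for all $i,j$.

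For the first property, I use that a sharing matrix is row stochastic. Summing the $i$-th row gives $1=\sum_{j=1}^n\mu_i(X_j)=1+\delta\sum_{j=1}^n k_{ij}$, hence $\delta\sum_{j=1}^n k_{ij}=0$, and since $\delta>0$ we get $\sum_{j=1}^n k_{ij}=0$ for every $i$.

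For the second property, suppose $\sum_{i=1}^n\lambda_i\mu_i=0$. Evaluating this identity of measures on a measurable set $A$ gives $\sum_{i=1}^n\lambda_i\mu_i(A)=0$, and I apply this to two natural choices. Taking $A=X$ and using $\mu_i(X)=1$ yields $\sum_{i=1}^n\lambda_i=0$. Taking $A=X_j$ yields
$$\sum_{i=1}^n\lambda_i\Big(\tfrac1n+\delta k_{ij}\Big)=\tfrac1n\sum_{i=1}^n\lambda_i+\delta\sum_{i=1}^n\lambda_i k_{ij}=0 .$$
Substituting $\sum_{i=1}^n\lambda_i=0$ and dividing by $\delta>0$ gives $\sum_{i=1}^n\lambda_i k_{ij}=0$ for every $j$, which is exactly condition (2) of Definition \ref{def:propre}.

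There is no real obstacle here; the argument is an elementary computation. The only points requiring care are that property (2) needs the auxiliary identity $\sum_i\lambda_i=0$ — which is why one must test the relation $\sum_i\lambda_i\mu_i=0$ against the whole cake $X$ and not only against the pieces $X_j$, otherwise the $1/n$-term would not cancel — and that the hypothesis $\delta>0$ is used in both parts to divide out the factor $\delta$.
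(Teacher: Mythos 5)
Your proof is correct and follows essentially the same route as the paper: row-stochasticity of the sharing matrix gives $\sum_j k_{ij}=0$, and testing the relation $\sum_i\lambda_i\mu_i=0$ against $X$ (to get $\sum_i\lambda_i=0$) and against each piece $X_j$ yields $\sum_i\lambda_i k_{ij}=0$. No issues to report.
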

\begin{proof}
As $P$ is a stochastic matrix, and $P+ \delta K$ is a sharing matrix, we deduce that $\sum_{j=1}^n k_{ij}=0$. Now suppose that we have the following relation between the measure $\mu_i$: $\sum_{i=0}^n \lambda_i \mu_i=0$. This gives the following relation: 
$$(\sharp) \quad \sum_{i=0}^n \lambda_i (1/n+\delta k_{ij})=0,$$
because there exists measurable sets $X_j$ such that $\mu_i(X_j)=1/n+\delta k_{ij}$ since $P+\delta K$ is a sharing matrix. Furthermore, as $\sum_{i=0}^n \lambda_i \mu_i=0$, we get that $\sum_{i=1}^n \lambda_i \mu_i(X) =0$.  Thus $\sum_{i=1}^n \lambda_i =0$. As $\delta >0$, the equation $(\sharp)$ gives the desired result: 
\mbox{$\sum_{i=0} \lambda_ik_{ij}=0.$}
\end{proof}
\section{Effective methods}\label{sec:test}
In the following, we are going to study how to test if a fair division satisfying some inequalities constraints exists. Then we will show how to get this kind of fair division when the density functions are piecewise constant.
\subsection{How to test the existence of a fair division given by a relation matrix}
In this section, we use the previous theorem in order to study another fair division problem. Suppose that $p=(p_1,\ldots,p_n)$ is a point such that $p_1+\cdots+p_n=1$ with $p_i$ positive and $r_{ij}$ are $n^2$ relations in $\{ <, = , > \}$. Barbanel has asked the question:\emph{ How can we decide if there exists a partition $X=\sqcup_{i=1}^n X_i$ such that $\mu_i(X_j)\, r_{ij} \, p_j$?}\\

If there exists a proper matrix $K=(k_{ij})$ such that $k_{ij}\, r_{ij} \,0$ then by Theorem~\ref{thm1} there exists a partition $X=\sqcup_{i=1}^n X_i$ such that $\mu_i(X_j)\, r_{ij}\, p_j$.\\
 Conversely, if $X=\sqcup_{i=1}^n X_i$ is a partition such that $\mu_i(X_j)\, r_{ij} \, p_j$ then the matrix $K=(k_{ij})$ where $k_{ij}=\mu_i(X_j)- p_j$ is a proper matrix such that $k_{ij} \,r_{ij}\, 0$. Thus as already remarked by Barbanel, the previous question can be restated as: \emph{How can we decide if there exists a proper matrix $K$ satisfying $k_{ij}\, r_{ij} \,0$?}\\

Suppose that we know a basis $\{\lambda_1,\ldots,\lambda_l\}$ for the relations between the measures $\mu_i$, where $\lambda_\alpha=(\lambda_{\alpha 1},\ldots,\lambda_{\alpha n})$. Then, we just have to solve the following system:
$$
(\mathcal{S}) \quad 
\begin{cases}
 \forall i,\quad \sum_{j=1}^n k_{ij}=0,\\
\forall j, \forall \alpha, \quad \sum_{i=1}^n \lambda_{\alpha i} k_{ij}=0,  \\
\forall i, \forall j, \quad k_{ij}\, r_{ij} \, 0.
\end{cases}
$$

We thus have to solve a linear system of equalities and inequalities. This can be done for example by using the Fourier-Motzkin elimination method, see e.g. \cite{Kuhn_ineq}.\\

Now the question is : \emph{How can we find a basis $\{\lambda_1,\ldots,\lambda_l\}$ of relations between the $\mu_i$?}\\

If the measure $\mu_i$ are given thanks to a density function $\varphi_i(x)$ relatively to a given measure, e.g. the Lebesgue measure, then we can answer the previous question.\\
Indeed, we can set 
$$f_i(x)=\dfrac{\varphi_i(x)}{\sum_{i=1}^n \varphi_i(x)},$$ for $x$ such that $\sum_{i=1}^n \varphi_i(x)\neq 0$ and $f_i(x)=0$ otherwise. \\
Indeed, the density of $\mu=\mu_1+\cdots+\mu_n$ is $\varphi_1(x)+\cdots+\varphi_n(x)$. \\
Thus, if $\varphi_i$ are given explicitly then we can deduce an explicit formula for the $f_i$. This gives
$$\int_X f_i\,f_jd\mu=\int_X \dfrac{\varphi_i(x)\,\varphi_j(x)}{\sum_{i=1}^n \varphi_i(x)}dx.$$ 
If these expressions are computable then we have an explicit form for $G(f_1,\ldots,f_n)$. Thus in this situation the basis $\{\lambda_1,\ldots,\lambda_l\}$ can be computed easily. Indeed, this basis is a basis of $\ker G(f_1,\ldots,f_n)$ thanks to  Lemma \ref{lem:nullspace}. In conclusion, the system $\mathcal{S}$ can be solved with linear algebra only.

\begin{Rem} As mentioned before, the previous strategy gives a theoretical method for solving Barbanel's problem. Our solution do not give an algorithm in the Robertson-Webb model, see \cite{RobertsonWebb}, \cite[Chapter 13]{handbook}. However, if $X=[0,1]$ and the $\varphi_i$ are polynomials or simple functions i.e. linear combination of indicator functions, then $G(f_1,\ldots,f_n)$ can be constructed effectively and our method can be used.
\end{Rem}

\begin{Ex}\label{example}
Suppose that $X=[0,1]$ and that the three measures are given with their densities $\varphi_1(x)=10\rchi_{[0,1/10]}(x)$, $\varphi_2(x)=\frac{10}{9}\rchi_{[1/10,1]}(x)$, $\varphi_3(x)=\rchi_{[0,1]}(x)$. Suppose also that $p=(1/3,1/3,1/3)$ and that $R_1=(r_{ij}^{(1)})$ $R_1=(r_{ij}^{(2)}) $ are the following matrix
$R_1=\begin{pmatrix}
>&=&<\\
> & >& <\\
<&<&>
\end{pmatrix}$,  
 $R_2=\begin{pmatrix}
>&=&<\\
< & >& >\\
<&>&>
\end{pmatrix}$. \\
As we can compute easily the expression $\displaystyle \int_X \dfrac{\varphi_i(x)\varphi_j(x)}{\sum_{i=1}^n \varphi_i(x)}dx.$
The Gram matrix associated to this situation is:
$$G(f_1,f_2,f_3)=
\begin{pmatrix}
\dfrac{10}{11}&0&\dfrac{1}{11}\\
&&\\
0&\dfrac{10}{19}&\dfrac{9}{19}\\
&&\\
\dfrac{1}{11}&\dfrac{9}{19}&\dfrac{91}{209}
\end{pmatrix}.
$$
\end{Ex}
We can remark that this matrix is a sharing matrix corresponding to a proportinal fair division as proved in Remark \ref{rem_prop}, but it is not an envy-free fair division.\\

A basis for the kernel of this matrix is given by $\{(1,9,-10)\}$. Thus the only relation between the measure is: $\mu_1+9\mu_2=10\mu_3$.

Now, we have to solve the following system when $s=1$ or $s=2$:
$$(\mathcal{S}_s) \quad \begin{cases}
 \forall i,\quad \sum_{j=1}^3 k_{ij}=0,\\
\forall j,  \quad \sum_{i=1}^3 \lambda_{i} k_{ij}=0,  \\
\forall i, \forall j, \quad k_{ij}\, r_{ij}^{(s)} \, 0,
\end{cases}
$$
where $(\lambda_1,\lambda_2,\lambda_3)=(1;9;-10)$.\\
When $s=1$ the system have no solution. Indeed, we must have $k_{11}>0$, $k_{21}>0$, $k_{11}+9k_{21}=10 k_{31}$, and $k_{31}<0$ which gives a contradiction.\\
When $s=2$ we can compute solutions. For example the following matrix $K=(k_{ij})$ is a solution of $(\mathcal{S}_2)$:
$$K=
\begin{pmatrix}
1&0&-1\\
&&\\
-\dfrac{1}{3}&\dfrac{1}{9}&\dfrac{2}{9}\\
&&\\
-\dfrac{2}{10}&\dfrac{1}{10}&\dfrac{1}{10}
\end{pmatrix}.
$$
\

\subsection{The case of piecewise constant density functions}
In this last section, we consider the case where the density functions are piecewise constant. This means $\varphi_i(x)=\sum_j c_{ij}\rchi_{I_{ij}}(x)$, where $\rchi_{I_{ij}}$ is the indicator function of the interval $I_{ij}=[x_{ij},x_{i(j+1)}]$, $x_{ij}\in [0,1]$.\\
Suppose that the measures $\mu_i$ are given by these density functions and that we want to construct an hyper envy-free fair division relatively to a given matrix $K=(k_{ij})$ and $p=(p_1,\ldots,p_n)$.\\
We denote by $\mathcal{I}$ the set of intervals $[x_{kl},x_{mn}]$ such that there exists no $x_{ij}$ satisfying $x_{kl}< x_{ij} < x_{mn}$. This means that on each interval $I \in \mathcal{I}$ the density functions are constant.\\

Now, we explain how to construct an hyper envy-free fair division:\\
Since all the densities $\varphi_i$ are constant on $I$ we have for all subintervals $J$ of $I$: 
$$\mu_i(J)\ell(I)=\ell(J)\mu_i(I),$$
where $\ell(I)$ is the length of $I$ for the Lebesgue measure.

Thus if $\mu_i(I)\neq 0$  we have:
$$\dfrac{\mu_i(J)}{\mu_i(I)}=\dfrac{\ell(J)}{\ell(I)}.$$

Now, consider $\alpha_{1,I}, \ldots,\alpha_{n,I}$ such that $\alpha_{i,I} \geq 0$ and $\sum_i \alpha_{i,I}=1$. We can divide each $I \in \mathcal{I}$ in order to have a partition of $I$ into $n$ subintervals $I_1, \ldots,I_n$ satisfying $\ell(I_j)=\alpha_{j,I}\ell(I)$. The previous relation implies
$$(\flat) \quad \mu_i(I_j)=\alpha_{j,I}\mu_i(I) \, \textrm{ for all } i,j.$$

%
%
%
%
%
%
%
%
%
%
%
%
%
%
%

Thus we set $X_j= \sqcup_{I \in \mathcal{I}} I_j$, and we get:
$$X=\sqcup_{j=1}^n X_j \textrm{ and } \mu_i(X_j)=\sum_{I \in \mathcal{I}} \alpha_{j,I}\mu_i(I).$$


Now, in order to find an hyper envy-free fair division relatively to $K$ and $p$ we have to solve:
$$(\mathcal{S}')
\quad
\begin{cases}
\forall I \in \mathcal{I}, \quad \sum_{i=1}^n \alpha_{i,I}=1,\\
\forall i \in \{1,\ldots,n\}, \forall I \in \mathcal{I}, \quad \alpha_{i,I} \geq 0,\\
\forall i,j \in \{1,\ldots,n\}, \quad \sum_{I \in \mathcal{I}} \alpha_{j,I} \mu_i(I)=p_j +k_{ij}\delta,\\
\delta > 0.
\end{cases}
$$

\begin{Ex}\label{exsuite}
With the density functions and the matrix $K=(k_{ij})$  of  Example~\ref{example}, and $p=(1/3,1/3,1/3)$, we have $\mathcal{I}=\{[0,1/10],[1/10,1]\}$ and  the system $(\mathcal{S}')$ gives:
$\begin{cases}
 \sum_{i=1}^3 \alpha_{i,[0,1/10]}=1\\
\forall i \in \{1,\ldots,3\},\quad \alpha_{i,[0,1/10]} \geq 0,\\
 \sum_{i=1}^3 \alpha_{i,[1/10,1]}=1\\
\forall i \in \{1,\ldots,3\},\quad \alpha_{i,[1/10,1]} \geq 0,\\
\forall i,j \in \{1,\ldots,3\}, \quad \sum_{I \in \mathcal{I}} \alpha_{j,I} \mu_i(I)=1/3 +k_{ij}\delta,\\
\delta>0.
\end{cases}
$
\end{Ex}
We get the following solution:\\
We set $a:=\alpha_{1,[0,1/10]}$ and $a$ must satisfies the condition 
$a \in ]1/3,2/3[$.\\
Then $\alpha_{2,[0,1/10]}=1/3$, $\alpha_{3,[0,1/10]}=2/3-a$, $\alpha_{1,[1/10,1]}=4/9-1/3a$,\\ $\alpha_{2,[1/10,1]}=8/27+1/9a$, $\alpha_{3,[1/10,1]}=7/27+2/9a$, \mbox{$\delta=a-1/3$}.
Thus, with $\alpha=1/2$ we get:\\
$a:=\alpha_{1,[0,1/10]}=1/2$, $\alpha_{2,[0,1/10]}=1/3$,  $\alpha_{3,[0,1/10]}=1/6$, $\alpha_{1,[1/10,1]}=5/18$, $\alpha_{2,[1/10,1]}=19/54$, $\alpha_{3,[1/10,1]}=10/27$, \mbox{$\delta=1/6$}.\\
We divide $[0,1/10]$ and $[1/10,1]$ in order to respect the condition $(\flat)$. For the interval $I=[0,1/10]$ we get: $I_1=[0,1/20]$, $I_2=[1/20,1/20+1/3\times 1/10]=[1/20,1/12]$ and $I_3=[1/12,1/10]$. With the same method we divide $[1/10,1]$ and then we get:
$$X_1=\left[0,\dfrac{1}{20}\right] \sqcup \left[\dfrac{1}{10},\dfrac{7}{20} \right], \, X_2=\left[\dfrac{1}{20},\dfrac{1}{12}\right] \sqcup \left[\dfrac{7}{20},\dfrac{2}{3}\right], \, X_3=\left[\dfrac{1}{12},\dfrac{1}{10}\right] \sqcup \left[\dfrac{2}{3},1\right].$$
By construction, this partition gives an hyper envy-free fair division relatively to $K$ and $p=(1/3,1/3,1/3)$ with $\delta=1/6$. Indeed, we can check that the sharing matrix associated to the previous partition is
$$\begin{pmatrix}
\dfrac{1}{2}&\dfrac{1}{3}&\dfrac{1}{6}\\
&&\\
\dfrac{5}{18}& \dfrac{19}{54}&\dfrac{10}{27}\\
&&\\
\dfrac{3}{10}& \dfrac{7}{20}&\dfrac{7}{20}
\end{pmatrix}.
$$

Therefore, this partition gives also a fair division respecting the inequalities given in the matrix $R_2$.\\

In conclusion, when the density functions are piecewise constant then  we have a method to compute an hyper envy-free fair division. Thus we can : decide if a fair division given by a relation matrix exists, and then construct a partition giving this fair division.\\

\begin{Rem}
The pseudo-inverse of the matrix $G(f_1,f_2,f_3)$  obtained in Example~\ref{example} and Example~\ref{exsuite} is 
$$G(f_1,f_2,f_3)^+=
\begin{pmatrix}
\frac{36191}{33124}&-{\frac{3519}{33124}
}&{\frac{113}{8281}}\\ 
&&\\
-{\frac{3519}{33124}}&{\frac{19471}{33124}}&{\frac{4293}{8281}}\\
&&\\
 \frac{113}{8281}&{\frac{4293}{8281}}&\frac{3875}{8281} 
\end{pmatrix}.
$$
The matrix $G(f_1,f_2,f_3)^+K$ is
$$G(f_1,f_2,f_3)^+K=
\begin{pmatrix}
{\frac{512}{455}}&-{\frac{19}{1820}}&-{\frac{2029}{1820}}\\ 
&&\\
-{\frac{554}{1365}}&{\frac{1919}{16380}}&{\frac{4729}{16380}}\\ 
&&\\
-{\frac{23}{91}}&{\frac{19}{182}}&{\frac{27}{182}}
\end{pmatrix}.
$$
Then $\underset{ij}{\max}|\big(G(f_1,\ldots,f_n)^{+} K\big)_{ij}|=512/455$.\\
 As $\underset{i}{\min}(p_i)=1/3$, we get $\mathcal{B}=455/1536$. \\
 Thus Theorem \ref{thm1} gives the existence of an hyper envy-free fair division relatively to $K$ and $p$ with $\delta < 455/1536 \approx 0.296$ without computing a solution of $(\mathcal{S}')$.\\
  In the above example, the solution of $(\mathcal{S}')$ gives $\delta=a-1/3$ with $a<2/3$. Thus in this particular situation an hyper envy-free fair division exists with $\delta<1/3$. This shows that our bound $\mathcal{B}$ is not optimal in this case, but the order of magnitude of $\mathcal{B}$ is not too small.
\end{Rem}

\section*{Conclusion}
The key ingredient of this article is the Gram matrix $G(f_1,\ldots,f_n)$. Thanks to this matrix we can construct the sharing matrix $P+\delta K$, see Theorem \ref{thm1}, and we can also compute the relations between the measures, see Lemma \ref{lem:nullspace}.\\
This matrix seems very useful and we have already noticed that this matrix corresponds to a proportional and symmetric (i.e. $\mu_i(X_j)=\mu_j(X_i)$) fair division, see Remark \ref{rem_prop}. Thus  a natural question appears: \\
\emph{Is $G(f_1,\ldots,f_n)$ the optimal fair division for a certain criterion and what is the meaning of this criterion in terms of fair division?}\\

\emph{Acknowledgments}: The authors thank L.-M. Pardo for fruitful discussions during the preparation of this article.

  \bibliographystyle{alpha} 
 
\bibliography{cakebiblio}

\newcommand{\etalchar}[1]{$^{#1}$}
\begin{thebibliography}{BCE{\etalchar{+}}16}

\bibitem[Bar96a]{Barbanelarticle}
J.~B. Barbanel.
\newblock On the possibilities for partitioning a cake.
\newblock {\em Proc. Amer. Math. Soc.}, 124(11):3443--3451, 1996.

\bibitem[Bar96b]{Barbanel94}
J.~B. Barbanel.
\newblock Super envy-free cake division and independence of measures.
\newblock {\em Journal of Mathematical Analysis and Applications}, 197(1):54 --
  60, 1996.

\bibitem[Bar05]{Barbanelbook}
J.~B. Barbanel.
\newblock {\em The geometry of efficient fair division}.
\newblock Cambridge University Press, 2005.

\bibitem[BCE{\etalchar{+}}16]{handbook}
F.~Brandt, V.~Conitzer, U.~Endrisse, J.~Lang, and A.~D. Procaccia.
\newblock {\em Handbook of Computational Social Choice}.
\newblock Cambridge University Press, 2016.

\bibitem[BIG03]{Greville}
A.~Ben-Israel and Thomas N.~E. Greville.
\newblock {\em Generalized inverses}, volume~15 of {\em CMS Books in
  Mathematics/Ouvrages de Math\'ematiques de la SMC}.
\newblock Springer-Verlag, New York, second edition, 2003.
\newblock Theory and applications.

\bibitem[BT96]{BramsTaylor}
S.~J. Brams and A.~D. Taylor.
\newblock {\em Fair division - from cake-cutting to dispute resolution}.
\newblock Cambridge University Press, 1996.

\bibitem[Ch{\`e}17]{Cheze}
G.~Ch{\`e}ze.
\newblock Existence of a simple and equitable fair division: A short proof.
\newblock {\em Mathematical Social Sciences}, pages~--, 2017.

\bibitem[DS61]{DubinsSpanier}
L.E. Dubins and E.~H. Spanier.
\newblock How to cut a cake fairly.
\newblock {\em The American Mathematical Monthly}, 68(1):1--17, 1961.

\bibitem[DWW51]{DWW}
A.~Dvoretzky, A.~Wald, and J.~Wolfowitz.
\newblock Relations among certain ranges of vector measures.
\newblock {\em Pacific J. Math.}, 1:59--74, 1951.

\bibitem[Kuh56]{Kuhn_ineq}
H.~W. Kuhn.
\newblock Solvability and consistency for linear equations and inequalities.
\newblock {\em Amer. Math. Monthly}, 63:217--232, 1956.

\bibitem[Mou03]{Moulin}
H.~Moulin.
\newblock {\em Fair division and collective welfare}.
\newblock {MIT} Press, 2003.

\bibitem[RW98]{RobertsonWebb}
J.~M. Robertson and W.~A. Webb.
\newblock {\em Cake-cutting algorithms - be fair if you can}.
\newblock A {K} Peters, 1998.

\bibitem[SHS]{Segal-Halevi}
E.~Segal-Halevi and B.~Sziklai.
\newblock Resource-monotonicity and population-monotonicity in cake-cutting.
\newblock GAMES 2016.

\bibitem[Ste48]{Steinhaus}
H.~Steinhaus.
\newblock {The problem of fair division}.
\newblock {\em Econometrica}, 16(1):101--104, January 1948.

\bibitem[Web99]{Webb}
W.~A. Webb.
\newblock An algorithm for super envy-free cake division.
\newblock {\em J. Math. Anal. Appl.}, 239(1):175--179, 1999.

\bibitem[Wel85]{Weller}
D.~Weller.
\newblock Fair division of a measurable space.
\newblock {\em Journal of Mathematical Economics}, 14(1):5 -- 17, 1985.

\end{thebibliography}

\end{document}